\documentclass[11pt]{article}

\usepackage[bibliosources=refs.bib]{pomegranate}

\DeclareOperator{\per}
\DeclareOperator{\tr}
\DeclareOperator{\Exp}
\newcommand{\Herm}{\mathbb{H}}
\newcommand{\PD}{\Herm^{++}}
\newcommand{\PSD}{\Herm^{+}}
\newcommand{\CN}{\mathcal{CN}}
\newcommand{\eps}{\varepsilon}

\title{Optimal \(e^{(\gamma+o(1))n}\)-Approximation of the Permanent of Positive Semidefinite Matrices}
\author[1]{Nima Anari}
\author[2]{Farzam Ebrahimnejad}
\affil[1]{Stanford University, \texttt{anari@stanford.edu}}
\affil[2]{The Voleon Group, \texttt{f.ebrahimn@gmail.com}}
\date{}

\begin{document}

\maketitle

\begin{abstract}
We determine, up to lower-order terms in the exponent, the best possible deterministic polynomial-time approximation ratio for the permanent of a Hermitian positive semidefinite matrix. If \(A\succeq0\) has no zero diagonal entry, \(d=\rank(A)\), \(A=VV^\dagger\) with \(V\in\C^{n\times d}\) full column rank, and \(v_1,\ldots,v_n\) are the rows of \(V\), define
\[
  \Phi(V)=
  \max\set*{
    \sum_{i=1}^n \log(v_i^\dagger Xv_i)
    +\log\det X-\tr X+d
    \given
    X\succ0
  },
  \qquad
  \widehat P(A)=e^{\Phi(V)}.
\]
We prove the exact sandwich
\[
  e^{-\gamma n}\widehat P(A)\le \per(A)\le \widehat P(A).
\]
Here \(\gamma\) is the Euler--Mascheroni constant. Since the maximization is concave, this gives a deterministic polynomial-time \(e^{(\gamma+\eps)n}\)-approximation for every \(\eps>0\). Combined with the previous \(e^{(\gamma-\eps)n}\)-hardness of approximation for positive semidefinite permanents, this resolves the optimal exponential approximation ratio for deterministic polynomial-time algorithms as \(e^{(\gamma+o(1))n}\), assuming \(\Class{P}\ne\Class{NP}\). The proof is an entropy argument applied to the standard Wick integral formula for \(\per(A)\); the loss is exactly \(\gamma\) per factor because \(\E{\log T}=-\gamma\) for \(T\sim\Exp(1)\).

The result was obtained through interactions with GPT 5.5 Pro Extended: the first author's interaction was one-shot, while the second author's was a separate multi-turn interaction with high-level guidance. Both authors verified the theorem and proof. Codex was used to assemble and typeset the manuscript.
\end{abstract}

\section{Introduction}

The permanent of a matrix \(A\in\C^{n\times n}\) is
\[
  \per(A)=\sum_{\sigma\in S_n}\prod_{i=1}^n A_{i,\sigma(i)}.
\]
Computing it exactly is a central \(\Class{\#P}\)-hard problem \cite{Valiant1979}. For nonnegative matrices there is a celebrated randomized fully polynomial-time approximation scheme \cite{JerrumSinclairVigoda2004}, but the permanent of a positive semidefinite matrix is a different object. The entries may have signs or phases, while the Gram structure imposes strong algebraic constraints. In particular, the Wick representation in \cref{lem:wick} makes \(\per(A)\ge0\) transparent when \(A\succeq0\). It is natural to ask for the best possible exponential approximation factor achievable deterministically in polynomial time in this setting.

This paper proves that the optimal exponential constant is \(\gamma\), the Euler--Mascheroni constant, for deterministic algorithms up to arbitrarily small \(o(n)\) loss in the exponent. Together with the hardness theorem of Ebrahimnejad, Nagda, and Oveis Gharan \cite{EbrahimnejadNagdaOG2024}, this pins down the deterministic polynomial-time approximation ratio for positive semidefinite permanents at the exponential scale: assuming \(\Class{P}\ne\Class{NP}\), the optimal ratio is \(e^{(\gamma+o(1))n}\).

The algorithmic result follows from a short variational principle. Write \(A=VV^\dagger\), where \(V\in\C^{n\times d}\), \(d=\rank(A)\), and the rows of \(V\) are \(v_1,\ldots,v_n\in\C^d\). Consider
\begin{equation}\label{eq:phi-intro}
  \Phi(V)=
  \max\set*{
    \sum_{i=1}^n \log(v_i^\dagger Xv_i)
    +\log\det X-\tr X+d
    \given
    X\in\PD_d
  }.
\end{equation}
The maximization is concave. Its value is independent of the chosen full-rank Gram factor \(V\), since another factor differs from \(V\) by a unitary change of coordinates. The additive \(d\) is the Gaussian relative-entropy normalization; without it the bounds below would carry an extra \(e^d\) factor.

\begin{theorem}[Main theorem]\label{thm:main-intro}
Let \(A\succeq0\) be an \(n\times n\) Hermitian matrix, and let \(\widehat P(A)=\exp(\Phi(V))\), with the convention \(\widehat P(A)=0\) if \(A\) has a zero diagonal entry. Then
\[
  e^{-\gamma n}\widehat P(A)
  \le
  \per(A)
  \le
  \widehat P(A).
\]
Moreover, for every \(\eps>0\), \(\widehat P(A)\) can be approximated in deterministic polynomial time well enough to return an \(e^{(\gamma+\eps)n}\)-approximation to \(\per(A)\).
\end{theorem}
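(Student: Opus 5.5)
The plan rests on the Wick formula (\cref{lem:wick}). If \(g\sim\CN(0,I_d)\), then
\[
  \per(A)=\E*{\prod_{i=1}^n\abs{v_i^\dagger g}^2}.
\]
Both inequalities come from comparing this Gaussian expectation with a variational expression via the Gibbs (Donsker--Varadhan) principle. For a nonnegative function \(F\) of \(g\) with density \(p\), we have
\[
  \log\E_p F=\max_q\bigl(\E_q\log F-D(q\|p)\bigr).
\]
Both sides of the sandwich are obtained by restricting \(q\) to centered complex Gaussians \(\CN(0,X)\). For these,
\[
  D\bigl(\CN(0,X)\,\|\,\CN(0,I)\bigr)=\tr X-\log\det X-d.
\]
This accounts for the terms \(\log\det X-\tr X+d\) in \eqref{eq:phi-intro}.

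\textbf{Lower bound.} We take \(q=\CN(0,X)\) and \(F=\prod_i\abs{v_i^\dagger g}^2\). Under \(q\), each \(v_i^\dagger g\) is a complex Gaussian with variance \(v_i^\dagger Xv_i\). Hence \(\abs{v_i^\dagger g}^2=(v_i^\dagger Xv_i)\,T_i\) with \(T_i\sim\Exp(1)\), and
\[
  \E_q\log\abs{v_i^\dagger g}^2=\log(v_i^\dagger Xv_i)-\gamma.
\]
The \(T_i\) are correlated, but only linearity of expectation is used. The Gibbs inequality therefore gives
\[
  \log\per(A)\ge\sum_i\log(v_i^\dagger Xv_i)-\gamma n+\log\det X-\tr X+d .
\]
Maximizing over \(X\) yields \(\per(A)\ge e^{-\gamma n}\widehat P(A)\). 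If \(A\) has a zero diagonal entry the bound is trivial. If \(A\) has no zero diagonal entry, the maximum in \eqref{eq:phi-intro} is attained: the objective tends to \(-\infty\) at the boundary, since \(\log\det X\to-\infty\) as \(X\) degenerates and \(-\tr X\) dominates as \(X\to\infty\).

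\textbf{Upper bound.} This is the step I expect to be the main obstacle. My plan is to use the tilted measure \(q^*\propto F\cdot p\), for which equality holds in Gibbs, and then show that projecting onto Gaussians only increases the value. Concretely, let \(X^*\) be the optimizer. Stationarity reads
\[
  \sum_i \frac{v_iv_i^\dagger}{v_i^\dagger X^*v_i}+(X^*)^{-1}=I .
\]
Set \(w_i=v_i/\sqrt{v_i^\dagger X^*v_i}\). For every \(g\), the AM--GM inequality gives
\[
  \prod_i\abs{v_i^\dagger g}^2=\prod_i(v_i^\dagger X^*v_i)\cdot\prod_i\abs{w_i^\dagger g}^2\le\prod_i(v_i^\dagger X^*v_i)\,\Bigl(\tfrac1n\, g^\dagger\bigl(\textstyle\sum_i w_iw_i^\dagger\bigr)g\Bigr)^n .
\]
The resulting one-dimensional Gaussian integral is wasteful, though, because it produces \(n^n/n!\)-type losses.

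A better route is to bound pointwise by an exponential of a linear function:
\[
  \abs{w_i^\dagger g}^2\le e^{\abs{w_i^\dagger g}^2-1}.
\]
Then
\[
  \prod_i\abs{w_i^\dagger g}^2\le e^{-n}\exp\bigl(g^\dagger(I-(X^*)^{-1})g\bigr),
\]
and integrating against \(\CN(0,I)\) gives exactly \(e^{-n}\det(X^*)\). It remains to check that
\[
  \prod_i(v_i^\dagger X^*v_i)\,e^{-n}\det X^*=\widehat P(A).
\]
Taking the trace of the stationarity identity against \(X^*\) gives \(n+d=\tr X^*\), so \(-n=d-\tr X^*\), and the identity holds exactly. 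I would verify the scalar inequality \(t\le e^{t-1}\), the convergence of the Gaussian integral (it converges since \((X^*)^{-1}\succ0\)), and this trace identity carefully.

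\textbf{Algorithm.} \(\Phi\) is the maximum of a concave function: \(\log\) of a linear function of \(X\) plus \(\log\det\). It is computed via a Cholesky or eigen factorization \(V\) of \(A\), whose value does not depend on the chosen factor. A priori bounds place the optimizer in a region \(\eta I\preceq X\preceq RI\) with polynomially bounded bit complexity. The ellipsoid or interior-point method then yields \(\Phi\) to additive error \(\eps n\) in deterministic polynomial time. Returning \(e^{\Phi-\gamma n/2}\)-type rescaled values, or simply \(\widehat P\) itself, gives the \(e^{(\gamma+\eps)n}\) ratio.
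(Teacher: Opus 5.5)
Your proof is correct. The lower bound is the paper's argument: Gibbs evaluated at \(\CN(0,X)\), using only linearity and \(\E{\log T}=-\gamma\). Your upper bound, however, takes a genuinely different route.

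The paper stays inside the variational formula. For an arbitrary trial density \(\mu\) with second moment \(M\), Jensen gives \(\E_\mu{\log\abs{v_i^\dagger Z}^2}\le\log(v_i^\dagger Mv_i)\), and the complex maximum-entropy lemma gives \(h(\mu)\le d\log\pi+d+\log\det M\). So every trial value is at most \(\phi_V(M)\le\Phi(A)\). This is exactly the \emph{projection onto Gaussians} you mention and then abandon.

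What you actually carry out is a pointwise dual certificate. The inequality \(t\le e^{t-1}\) together with stationarity \(\sum_i w_iw_i^\dagger=I-(X^*)^{-1}\) gives \(\per(A)\le e^{-n}\prod_i(v_i^\dagger X^*v_i)\det X^*\). This equals \(\widehat P(A)\) because \(\tr X^*=n+d\), and every step checks out. More generally, for any \(c_i>0\) with \(\sum_ic_iv_iv_i^\dagger\prec I\), the same computation gives \(\per(A)\le e^{-n}\prod_ic_i^{-1}\det(I-\sum_ic_iv_iv_i^\dagger)^{-1}\). Its value at \(c_i=1/(v_i^\dagger X^*v_i)\) is exactly \(\widehat P(A)\).

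Your route has three advantages. It is more elementary: no differential entropy, no admissibility conditions for the Gibbs density, and no relation-matrix/Schur-complement step for non-circular trial densities. It lets any approximate dual point certify an upper bound. And stationarity gives \((X^*)^{-1}\preceq I\), hence \(X^*\succeq I\). Combined with \(\tr X^*=n+d\), this supplies the localization \(I\preceq X^*\preceq(n+d)I\) that you asserted without proof in the algorithmic paragraph.

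The paper's route, in exchange, needs neither attainment of the maximum nor first-order conditions for the upper bound. It is also the reusable template described in its Discussion.

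One correction to your framing: restricting \(q\) to Gaussians can only give the lower bound. Your opening claim that both sides come from that restriction is therefore wrong, though the argument you actually carry out does not rely on it.
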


The constant \(\gamma\) enters for a simple probabilistic reason. If \(g\) is a standard complex Gaussian, then \(\abs{g}^2\) is an exponential random variable of mean \(1\), and
\[
  \E{\log \abs{g}^2}=-\gamma.
\]
The permanent of a positive semidefinite matrix has a Wick integral representation as a Gaussian moment. This representation is standard; in the PSD permanent literature it appears, for example, through Gurvits's quantum permanent framework and in the work of Anari, Gurvits, Oveis Gharan, and Saberi \cite{Gurvits2003,AnariGurvitsOGSaberi2017}. For completeness we give the short Wick proof in \cref{lem:wick}. The new step is to combine this representation with the Gibbs variational formula while leaving the Hermitian second-moment matrix as the optimization variable. The lower bound in \cref{thm:main-intro} is obtained by evaluating the variational formula on a Gaussian distribution with covariance \(X\). The upper bound is obtained by applying Jensen's inequality to an arbitrary trial distribution and then using the maximum-entropy upper bound with the same Hermitian second moment. These two inequalities differ by exactly \(\gamma n\).

\paragraph{Relation to prior constants.}
Anari, Gurvits, Oveis Gharan, and Saberi \cite{AnariGurvitsOGSaberi2017} gave a deterministic \(e^{(\gamma+1)n}\)-approximation based on a natural convex relaxation. Yuan and Parrilo \cite{YuanParrilo2022} connected the problem to maximizing products of linear forms and gave related certifiable relaxations. Ebrahimnejad, Nagda, and Oveis Gharan \cite{EbrahimnejadNagdaOG2024} recently improved the approximation factor to \(e^{(0.9999+\gamma)n}\) and proved \(\Class{NP}\)-hardness below \(e^{(\gamma-\eps)n}\) for every \(\eps>0\). Thus the Gaussian representation, product-of-linear-forms viewpoint, and convex-optimization framework are all part of the existing landscape. The present relaxation removes the remaining \(e^n\)-type loss by adding the entropy correction \(\log\det X-\tr X+d\) directly to that objective, and therefore matches the hardness threshold up to lower-order terms in the exponent.

\paragraph{Provenance.}
The first and second authors each obtained the result by interacting with GPT 5.5 Pro Extended.

In the first author's interaction, the model was given prior work on positive semidefinite permanents and asked to find an \(e^{\gamma n}\)-approximation; no follow-up prompt was given by the first author before the model produced the theorem, relaxation, and proof. Independently and in parallel, the second author obtained the same result through a separate multi-turn interaction involving high-level guidance. Both authors verified the resulting mathematics. Codex was used to assemble and typeset the manuscript. See \cref{app:provenance} for the corresponding note.

\paragraph{Organization.}
\Cref{sec:related} discusses related work. \Cref{sec:prelim} recalls the permanent, complex Gaussians, entropy, and the Gibbs variational formula. \Cref{sec:relaxation} states the relaxation formally. \Cref{sec:proof} proves the approximation sandwich. \Cref{sec:algorithm} explains deterministic computation. \Cref{app:provenance} records provenance.

\section{Related Work}\label{sec:related}

The permanent is one of the canonical counting polynomials. Valiant \cite{Valiant1979} proved that exact computation is \(\#\)P-hard even for \(0/1\) matrices. For matrices with nonnegative entries, Jerrum, Sinclair, and Vigoda \cite{JerrumSinclairVigoda2004} gave an FPRAS using rapidly mixing Markov chains. These algorithms do not apply to arbitrary complex matrices, while the positive semidefinite setting has a different algebraic structure coming from its Gram representation.

The positive semidefinite permanent was studied algorithmically by Anari, Gurvits, Oveis Gharan, and Saberi \cite{AnariGurvitsOGSaberi2017}, who gave a deterministic \(c^n\)-approximation with \(c=e^{\gamma+1}\). Their analysis used Gaussian representations of PSD permanents, a viewpoint closely related to Gurvits's earlier quantum permanent framework \cite{Gurvits2003}. Yuan and Parrilo \cite{YuanParrilo2022} studied a closely related product-of-linear-forms relaxation over the complex sphere and derived approximation certificates for Hermitian positive semidefinite matrices.

Hardness results for positive semidefinite permanents developed more recently. Meiburg \cite{Meiburg2023} proved constant-factor \(\Class{NP}\)-hardness via connections to quantum state tomography. Ebrahimnejad, Nagda, and Oveis Gharan \cite{EbrahimnejadNagdaOG2024} established the first exponential hardness of approximation, showing \(\Class{NP}\)-hardness below \(e^{(\gamma-\eps)n}\), and gave an \(e^{(0.9999+\gamma)n}\)-approximation. Their result identifies \(\gamma\) as the right target constant. The contribution here is a direct variational relaxation attaining that constant on the deterministic algorithmic side, up to the usual finite-precision loss. In combination, these results determine the optimal deterministic approximation ratio as \(e^{(\gamma+o(1))n}\) at the exponential scale.

The proof uses standard ingredients from Gaussian analysis and information theory: the Wick--Isserlis formula for Gaussian moments \cite{Isserlis1918,Janson1997}, the Gibbs variational principle, Jensen's inequality, and the Gaussian maximum-entropy principle \cite{CoverThomas2006}. We recall these facts in a self-contained way in \cref{sec:prelim}; they are not meant as new contributions. The novelty is the entropy-corrected variational relaxation and the matching \(\gamma n\) analysis obtained by applying these standard ingredients to the PSD permanent integral with the second-moment matrix left as an optimization variable.

\section{Preliminaries}\label{sec:prelim}

This section fixes notation and records standard facts used in the proof. We include the details to make the short argument self-contained and to make clear where the classical ingredients enter.

\paragraph{Matrices and Gram factors.}
All positive semidefinite matrices in this paper are Hermitian. For \(d\ge1\), let \(\PSD_d\) and \(\PD_d\) denote the cones of \(d\times d\) Hermitian positive semidefinite and positive definite matrices. We use the standard Gram representation: if \(A\succeq0\) has rank \(d\), then
\[
  A=VV^\dagger,
\]
where \(V\in\C^{n\times d}\) has full column rank. The \(i\)th row of \(V\) is denoted \(v_i^\dagger\), so that
\[
  A_{ij}=v_i^\dagger v_j.
\]
This is just the spectral decomposition of \(A\), with zero eigenvalues omitted. If \(A_{ii}=0\) for some \(i\), then \(v_i=0\). Equivalently, by Cauchy--Schwarz for the PSD inner product, \(\abs{A_{ij}}^2\le A_{ii}A_{jj}=0\) for all \(j\). Hence the \(i\)th row and column of \(A\) are zero, and \(\per(A)=0\). Thus the only interesting case is \(v_i\ne0\) for all \(i\).

\paragraph{Complex Gaussians.}
We use the standard circularly symmetric complex Gaussian convention. A random vector \(Z\sim\CN(0,I_d)\) has density
\[
  \pi^{-d}\exp(-z^\dagger z)
\]
with respect to Lebesgue measure on \(\C^d\cong\R^{2d}\). More generally, \(Z\sim\CN(0,X)\), for \(X\in\PD_d\), has density
\[
  \frac{1}{\pi^d\det X}\exp(-z^\dagger X^{-1}z).
\]
Its differential entropy is
\[
  h(Z)=d\log \pi+d+\log\det X.
\]
This is the usual real Gaussian entropy formula applied to the \(2d\)-dimensional real vector \((\Re Z,\Im Z)\); see, e.g., \cite[Chapter~8]{CoverThomas2006}. If \(g\sim\CN(0,1)\), then \(\abs{g}^2\sim\Exp(1)\), because the squared radius of a two-dimensional real Gaussian with covariance \(I_2/2\) is exponential. Consequently,
\[
  \E{\log \abs{g}^2}
  =
  \int_0^\infty e^{-t}\log t\,dt
  =
  \Gamma'(1)
  =
  -\gamma.
\]
We will use the shorthand
\begin{equation}\label{eq:euler-constant}
  \E{\log \abs{g}^2}=-\gamma.
\end{equation}

\paragraph{Entropy.}
For a probability density \(\mu\) on \(\C^d\), let
\[
  h(\mu)=-\int_{\C^d}\mu(z)\log\mu(z)\,dz
\]
denote differential entropy, whenever it is well-defined. We use the following standard real maximum-entropy theorem.

\begin{lemma}[Real Gaussian maximum entropy]\label{lem:real-max-entropy}
Let \(Y\) be an \(\R^k\)-valued random vector with density, finite second moment, and nonsingular covariance matrix
\[
  \Sigma=\E{(Y-\E{Y})(Y-\E{Y})^\top}.
\]
Then
\[
  h(Y)
  \le
  \frac12\log\det(2\pi e\,\Sigma)
  =
  \frac12\log\bracks*{(2\pi e)^k\det\Sigma}.
\]
Equality holds if and only if \(Y\) is distributed as \(N(\E{Y},\Sigma)\).
\end{lemma}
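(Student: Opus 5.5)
The plan is the standard Gibbs-inequality argument: compare \(Y\) against the Gaussian with the same mean and covariance and use nonnegativity of relative entropy. We first reduce to mean zero: translating by \(\E{Y}\) preserves both the differential entropy (Lebesgue measure is translation invariant) and the covariance, so we may assume \(\E{Y}=0\) and \(\Sigma=\E{YY^\top}\). Let \(p\) be the density of \(Y\) and \(\phi\) the density of \(N(0,\Sigma)\), namely
\[
  \phi(y)=(2\pi)^{-k/2}(\det\Sigma)^{-1/2}\exp\parens*{-\tfrac12 y^\top\Sigma^{-1}y},
\]
which is strictly positive and well-defined because \(\Sigma\) is nonsingular.

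The key computation is the cross-entropy term. Since \(\log\phi\) is a quadratic polynomial in \(y\), its integral against \(p\) depends only on the second moments of \(Y\):
\[
  -\int p\log\phi
  =\tfrac12\log\bracks*{(2\pi)^k\det\Sigma}+\tfrac12\E{Y^\top\Sigma^{-1}Y}
  =\tfrac12\log\bracks*{(2\pi)^k\det\Sigma}+\tfrac12\tr(\Sigma^{-1}\Sigma),
\]
and the right-hand side equals \(\tfrac12\log\det(2\pi e\,\Sigma)\). This holds verbatim with \(p=\phi\), so \(-\int p\log\phi = h(\phi)\): the Gaussian entropy equals this cross-entropy.

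Next we apply Gibbs' inequality \(D(p\|\phi)=\int p\log(p/\phi)\ge0\). Combined with the previous step it gives \(h(Y)=-\int p\log p\le-\int p\log\phi=\tfrac12\log\det(2\pi e\,\Sigma)\). Gibbs' inequality follows from Jensen applied to the concave function \(\log\):
\[
  -D(p\|\phi)=\int_{p>0}p\log(\phi/p)\le\log\int_{p>0}\phi\le0.
\]
For the equality case, equality in Jensen for the strictly concave \(\log\) forces \(\phi/p\) to be constant \(p\)-a.e. The second inequality must also be tight, which forces \(\int_{p>0}\phi=1\). Together these give \(p=\phi\) almost everywhere, so \(Y\sim N(\E{Y},\Sigma)\); conversely, the Gaussian attains the bound by the cross-entropy computation.

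The main obstacle is the integrability bookkeeping, not the algebra. The assumption is only that \(h(Y)\) is well-defined, so it could be \(-\infty\), and we need the splitting \(\int p\log(p/\phi)=\int p\log p-\int p\log\phi\) to be legitimate. The term \(\int p\log\phi\) is finite because \(Y\) has finite second moment. Hence \(D(p\|\phi)\in[0,\infty]\) is well-defined as the integral of a function whose negative part is integrable, using \(x\log x\ge x-1\) applied to \(p/\phi\) against \(\phi\). The splitting then holds in the extended reals. If \(h(Y)=-\infty\), the inequality is trivial.
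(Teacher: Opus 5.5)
Your proof is correct and is the standard relative-entropy argument: compare with the Gaussian of the same mean and covariance, observe that $-\int p\log\phi$ depends only on second moments, and apply $D(p\|\phi)\ge 0$. The paper gives no proof of its own and simply cites this argument (Cover--Thomas, Chapter~8); your integrability bookkeeping, using the integrability of $p\log\phi$ and $x\log x\ge x-1$, even shows that $h(Y)\in[-\infty,\infty)$ is well-defined under the stated hypotheses, which the lemma does not assume separately.
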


This is the usual Gaussian maximum-entropy principle; see, e.g., \cite[Chapter~8]{CoverThomas2006}. The form we need is the following consequence for complex random vectors when only the Hermitian second moment is retained.

\begin{lemma}[Complex entropy bound]\label{lem:complex-max-entropy}
Let \(\mu\) be a probability density on \(\C^d\) with finite second moment, let \(Z\sim\mu\), and set
\[
  M=\E_\mu{ZZ^\dagger}.
\]
Then \(M\succ0\), and
\begin{equation}\label{eq:max-entropy}
  h(\mu)\le d\log\pi+d+\log\det M,
\end{equation}
with equality at \(\CN(0,M)\).
\end{lemma}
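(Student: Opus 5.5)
We will reduce \cref{lem:complex-max-entropy} to \cref{lem:real-max-entropy} by passing to the realification \(Y=(\Re Z,\Im Z)\in\R^{2d}\), whose differential entropy equals \(h(\mu)\) because the identification \(\C^d\cong\R^{2d}\) preserves Lebesgue measure.

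The first step is positive definiteness of \(M\). For any nonzero \(u\in\C^d\) we have \(u^\dagger Mu=\E{\abs{u^\dagger Z}^2}\). This vanishes only if \(Z\) lies almost surely in the real-codimension-two subspace \(\set{z\given u^\dagger z=0}\), which has Lebesgue measure zero and so is impossible for a distribution with a density. The same argument shows that the real covariance \(\Sigma\) of \(Y\) is nonsingular: if \(w^\top\Sigma w=0\), then \(w^\top(Y-\E{Y})=0\) almost surely, so \(Y\) lies in a hyperplane. \Cref{lem:real-max-entropy} therefore applies and gives \(h(\mu)\le\frac12\log\bracks*{(2\pi e)^{2d}\det\Sigma}\).

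Next we bound \(\det\Sigma\) in terms of \(\det M\). Writing \(R=\E{(Y)(Y)^\top}\) for the uncentered real second moment, we have \(\Sigma=R-\E{Y}\E{Y}^\top\preceq R\). Hence \(\det\Sigma\le\det R\), which reduces the bound to one about \(R\). Block-decompose \(R=\begin{pmatrix}P&Q\\Q^\top&S\end{pmatrix}\) with \(P=\E{\Re Z\,\Re Z^\top}\), \(S=\E{\Im Z\,\Im Z^\top}\), and \(Q=\E{\Re Z\,\Im Z^\top}\). Then \(M=(P+S)+i(Q^\top-Q)\). The key comparison is with the circular symmetrization
\[
  \tilde R=\frac12\begin{pmatrix}\Re M&-\Im M\\ \Im M&\Re M\end{pmatrix},
\]
which is the average of \(R\) and \(JRJ^\top\), where \(J\) is the real matrix of multiplication by \(i\). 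Indeed \(JRJ^\top\) is the second moment of \(iZ\), and averaging \(Z\) with \(iZ\) keeps the Hermitian moment \(\E{ZZ^\dagger}\) while killing the pseudo-covariance \(\E{ZZ^\top}\). Since \(\log\det\) is concave and \(\det(JRJ^\top)=\det R\), we get \(\log\det\tilde R\ge\frac12\log\det R+\frac12\log\det(JRJ^\top)=\log\det R\). Finally, the standard fact that the realification of a Hermitian \(d\times d\) matrix \(M\) has determinant \((\det M)^2\) gives \(\det\tilde R=2^{-2d}(\det M)^2\).

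Combining these,
\[
  h(\mu)\le\frac12\log\bracks*{(2\pi e)^{2d}2^{-2d}(\det M)^2}=d\log\pi+d+\log\det M,
\]
which is \eqref{eq:max-entropy}. Equality at \(\CN(0,M)\) follows from the entropy formula \(h(Z)=d\log\pi+d+\log\det X\) recalled above, taking \(X=M\), since \(\CN(0,M)\) has second moment \(M\).

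The main obstacle is the symmetrization step: checking that \(\tilde R\) is exactly the realification of \(M/2\) and correctly justifying \(\det R\le\det\tilde R\). As a fallback, the same inequality can be obtained as Fischer-type via Gibbs: compare \(\mu\) with the density of \(\CN(0,M)\), so that \(0\le D(\mu\,\|\,\CN(0,M))=-h(\mu)+d\log\pi+\log\det M+\E{Z^\dagger M^{-1}Z}\). Because \(\E{Z^\dagger M^{-1}Z}=\tr(M^{-1}M)=d\), this gives the bound directly and avoids the real covariance entirely. I would in fact present this relative-entropy argument as the main proof, with the realification only used for \(M\succ0\).
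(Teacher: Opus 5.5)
Your proposal is correct. The argument you say you would actually present, relative entropy against $\CN(0,M)$, is a genuinely different and shorter route than the paper's.

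\textbf{The paper's route.} It works through the realification. It centers at $m=\E_\mu{Z}$ and applies the real maximum-entropy lemma to $(\Re(Z-m),\Im(Z-m))$. It then bounds the real covariance by the Hermitian covariance $H$, using a Schur complement on the augmented covariance of $(Z-m,\overline{Z-m})$; this shows the relation matrix can only lower the determinant. Finally it uses $H\preceq M$.

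\textbf{Your main route.} The Gibbs argument skips all of this. Since $-\log$ of the $\CN(0,M)$ density is $d\log\pi+\log\det M+z^\dagger M^{-1}z$, the cross-entropy depends on $\mu$ only through $\E_\mu{Z^\dagger M^{-1}Z}=\tr(M^{-1}M)=d$. Nonnegativity of $D(\mu\,\|\,\CN(0,M))$ then gives the bound in one line. You need neither the mean, the relation matrix, nor the real lemma; you only need $M\succ0$, which you establish first. It also gives the equality case directly, namely only $\mu=\CN(0,M)$. Because the log of a Gaussian density is $\mu$-integrable under a finite second moment, $h(\mu)\in[-\infty,\infty)$ is automatically well-defined. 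It is also the same mechanism as the paper's Gibbs lemma, which makes the presentation more unified.

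\textbf{What the paper's route buys.} It gives the intermediate, slightly sharper bound with the centered covariance $H$ in place of $M$. It also accounts explicitly for why the mean and the pseudo-covariance can only cost entropy, and it reuses a cited textbook theorem.

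\textbf{Your fallback.} The realification argument is also correct. I checked that $\frac12(R+JRJ^\top)$ is exactly $\frac12$ times the realification of $M$. Also $R\succeq\Sigma\succ0$, so the log-det concavity step is legitimate. It differs from the paper in two ways: it removes the mean first via $\Sigma\preceq R$, and it replaces the Schur-complement estimate with concavity of $\log\det$ under the symmetrization $Z\mapsto iZ$.
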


\begin{proof}
If \(u^\dagger M u=0\), then \(u^\dagger Z=0\) almost surely. For \(u\ne0\), this contradicts absolute continuity with respect to Lebesgue measure on \(\C^d\). Thus \(M\succ0\).
Let \(m=\E_\mu{Z}\), write \(Z-m=R+iS\), and let \(\Sigma_{\R}\) be the real covariance matrix of \((R,S)\in\R^{2d}\). The Hermitian covariance
\[
  H=\E_\mu{(Z-m)(Z-m)^\dagger}
\]
also satisfies \(H\succ0\), by the same absolute-continuity argument. It does not by itself determine \(\Sigma_{\R}\); the missing information is the relation matrix
\[
  C=\E_\mu{(Z-m)(Z-m)^\top}.
\]
This extra matrix can only reduce the determinant relevant for entropy. Indeed, the augmented covariance of \((Z-m,\overline{Z-m})\) is
\[
  \Gamma=
  \begin{pmatrix}
    H & C\\
    \overline C & \overline H
  \end{pmatrix}
  \succeq0,
\]
and the linear change of variables from \((R,S)\) to \((Z-m,\overline{Z-m})\) gives \(\det\Gamma=4^d\det\Sigma_{\R}\). The Schur complement gives
\[
  \det\Gamma
  =
  \det H\cdot\det(\overline H-\overline C H^{-1}C)
  \le
  (\det H)^2.
\]
Therefore
\[
  \det\Sigma_{\R}\le 4^{-d}(\det H)^2.
\]
Since translation does not change differential entropy and the identification \(\C^d\cong\R^{2d}\) has unit Jacobian, \(h(\mu)=h(R,S)\). Applying \cref{lem:real-max-entropy} gives
\[
  h(\mu)\le d\log\pi+d+\log\det H.
\]
Finally, \(H=M-mm^\dagger\preceq M\), and determinant monotonicity on the PSD cone gives \(\det H\le\det M\). This proves \cref{eq:max-entropy}. Equality is achieved by \(Z\sim\CN(0,M)\), for which \(m=0\) and \(C=0\).
\end{proof}

\begin{lemma}[Gibbs variational formula]\label{lem:gibbs}
Let \(U:\C^d\to\R\cup\{-\infty\}\) be measurable and assume
\[
  Z_U=\pi^{-d}\int_{\C^d}\exp(U(z)-z^\dagger z)\,dz
\]
is finite and positive. Let \(\mathcal A_U\) be the set of probability densities \(\mu\) on \(\C^d\) for which \(\E_\mu{Z^\dagger Z}<\infty\), \(h(\mu)\) is well-defined, and \(\E_\mu{U(Z)}\) is well-defined, so that the expression below has no indeterminate \(\infty-\infty\) form. Then
\[
  \log Z_U
  =
  \sup\set*{
    \E_\mu{U(Z)}-\E_\mu{Z^\dagger Z}+h(\mu)-d\log\pi
    \given
    \mu\in\mathcal A_U
  },
\]
provided the Gibbs density proportional to \(\exp(U(z)-z^\dagger z)\) belongs to \(\mathcal A_U\).
\end{lemma}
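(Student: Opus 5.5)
The plan is to prove the Gibbs variational formula through the standard relative-entropy (Donsker--Varadhan) identity. Write \(\gamma_0(z)=\pi^{-d}e^{-z^\dagger z}\) for the standard complex Gaussian density, so that \(Z_U=\int \gamma_0 e^{U}\,dz\). Define the Gibbs density
\[
  \nu(z)=\frac{\gamma_0(z)e^{U(z)}}{Z_U},
\]
which is a probability density because \(0<Z_U<\infty\). For \(\mu\in\mathcal A_U\), expand \(\log(\mu/\nu)=\log\mu-\log\gamma_0-U+\log Z_U\). Using \(-\log\gamma_0(z)=d\log\pi+z^\dagger z\), integrate against \(\mu\). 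This should give the identity
\[
  \E_\mu{U(Z)}-\E_\mu{Z^\dagger Z}+h(\mu)-d\log\pi
  =
  \log Z_U - D(\mu\,\|\,\nu).
\]
Both inequalities then follow. Since \(D(\mu\|\nu)\ge0\), the supremum is at most \(\log Z_U\). Equality is attained at \(\mu=\nu\), which lies in \(\mathcal A_U\) by hypothesis, so \(D(\nu\|\nu)=0\).

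The one delicate step is justifying the splitting of \(\int\mu\log(\mu/\nu)\) into the separate integrals \(\int\mu\log\mu\), \(\int\mu\,z^\dagger z\) and \(\int\mu\,U\). Membership in \(\mathcal A_U\) is what makes this legitimate.

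The second moment \(\E_\mu{Z^\dagger Z}\) is finite by assumption. The quantities \(h(\mu)\) and \(\E_\mu{U}\) are each well-defined in the extended sense, and the hypothesis that the full expression has no \(\infty-\infty\) form means their sum is unambiguous.

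It then remains to check that \(D(\mu\|\nu)\) is well-defined in \([0,\infty]\) and equals \(\log Z_U\) minus that expression. I will do this by splitting \(\mu\log(\mu/\nu)\) into positive and negative parts. A cleaner route is Jensen's inequality in the form
\[
  \int\mu\log\frac{\nu}{\mu}\le\log\int_{\{\mu>0\}}\nu\le 0,
\]
applied directly to the integrand \(\log\nu-\log\mu=\log\gamma_0+U-\log Z_U-\log\mu\). Each summand has a well-defined integral against \(\mu\), and the sum is not indeterminate, so linearity of the extended integral applies. Jensen then gives the upper bound \(\le\log Z_U\) without ever naming \(D\).

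If \(\mu\) fails to be absolutely continuous with respect to \(\nu\), for example if \(U=-\infty\) on a set where \(\mu>0\), then \(\E_\mu{U}=-\infty\). The objective is then \(-\infty\), and the inequality holds trivially.

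For attainment at \(\mu=\nu\), substitute \(\log\nu=\log\gamma_0+U-\log Z_U\) into \(h(\nu)=-\E_\nu{\log\nu}\). The objective collapses to \(\log Z_U\), using that all terms are finite or at least non-indeterminate because \(\nu\in\mathcal A_U\).

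I expect the main obstacle to be purely measure-theoretic: handling \(U\) that takes the value \(-\infty\) or is unbounded, while keeping the extended-real arithmetic honest. The algebraic identity itself is routine.
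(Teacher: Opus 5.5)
Your proposal is correct and follows the paper's proof: you rewrite the objective as $\log Z_U - D(\mu\,\|\,\nu)$ for the Gibbs density $\nu \propto \pi^{-d}e^{U(z)-z^\dagger z}$, get the upper bound from nonnegativity of relative entropy, and attain equality at $\mu=\nu$. Your extra bookkeeping (extended-real linearity under the no-$\infty-\infty$ hypothesis, the Jensen form of Gibbs' inequality, and the case where $\mu$ is not absolutely continuous with respect to $\nu$) justifies the splitting of $\int\mu\log(\mu/\nu)$, which the paper performs without comment.
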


\begin{proof}
This is the continuous form of the Gibbs variational principle, or equivalently the nonnegativity of relative entropy \cite[Chapter~2]{CoverThomas2006}.
Let \(\nu\) be the probability density proportional to
\[
  \pi^{-d}\exp(U(z)-z^\dagger z).
\]
For any \(\mu\in\mathcal A_U\), nonnegativity of relative entropy gives
\[
  0\le \D{\mu\river\nu}
  =
  \int \mu\log\mu
  -\E_\mu{U(Z)}+\E_\mu{Z^\dagger Z}+d\log\pi+\log Z_U.
\]
Rearranging yields the desired upper bound. Equality is attained by \(\mu=\nu\), which is admissible by assumption.
\end{proof}

\section{The Relaxation}\label{sec:relaxation}

Let \(A=VV^\dagger\succeq0\) have rank \(d\), with nonzero rows \(v_1,\ldots,v_n\). Define
\begin{equation}\label{eq:phi}
  \Phi(V)
  =
  \max\set*{\phi_V(X)\given X\in\PD_d},
  \qquad
  \phi_V(X)
  =
  \sum_{i=1}^n \log(v_i^\dagger Xv_i)
  +\log\det X-\tr X+d.
\end{equation}
The objective is concave on \(\PD_d\): each \(X\mapsto\log(v_i^\dagger Xv_i)\) is the logarithm of a positive linear functional, \(\log\det X\) is concave, and \(-\tr X\) is linear.
The product term is in the same family of product-of-linear-forms objectives explored in earlier relaxations for PSD permanents \cite{AnariGurvitsOGSaberi2017,YuanParrilo2022}. The additional term \(\log\det X-\tr X+d\) is the entropy correction used in this paper.
The \(+d\) is not an arbitrary shift: for complex Gaussians,
\[
  \log\det X-\tr X+d
  =
  -\D*{\CN(0,X)\river \CN(0,I_d)}.
\]
Equivalently, it is the dimension-dependent normalization coming from the Gaussian entropy formula. Without this term the bounds would carry an extraneous \(e^d\) factor rather than the exact sandwich in \cref{thm:sandwich}.

\begin{lemma}[Existence and invariance]\label{lem:exist-invariant}
Assume \(v_i\ne0\) for all \(i\) and that \(V\) has full column rank. Then the maximum in \cref{eq:phi} is finite and attained. Moreover \(\Phi(V)\) depends only on \(A=VV^\dagger\), not on the particular full-rank factor \(V\).
\end{lemma}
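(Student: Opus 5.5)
We prove the two claims separately, starting with existence. Write \(\phi=\phi_V\). We show that \(\phi\) is bounded above on \(\PD_d\), that its superlevel sets are compact subsets of \(\PD_d\), and that it is continuous there; the maximum is then attained.

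For the upper bound, use \(v_i^\dagger Xv_i\le \norm{v_i}^2\lambda_{\max}(X)\). Writing \(\lambda_1,\ldots,\lambda_d\) for the eigenvalues of \(X\), this gives
\[
  \phi(X)\le \sum_i\log\norm{v_i}^2+n\log\lambda_{\max}+\sum_j(\log\lambda_j-\lambda_j)+d .
\]
The linear term \(-\lambda_{\max}\) dominates \((n+1)\log\lambda_{\max}\), and each remaining \(\log\lambda_j-\lambda_j\le -1\). So \(\phi(X)\to-\infty\) as \(\lambda_{\max}\to\infty\), and \(\phi\) is bounded above.

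For the behavior at the boundary of the cone, we must rule out \(\lambda_{\min}\to0\) with \(\lambda_{\max}\) bounded. This is the step needing care, since the product terms \(\log(v_i^\dagger Xv_i)\) could in principle blow up. When \(\lambda_{\max}\le R\), each \(\log(v_i^\dagger Xv_i)\le\log(\norm{v_i}^2R)\) is bounded above. Meanwhile \(\log\det X\le (d-1)\log R+\log\lambda_{\min}\to-\infty\). Hence \(\phi\to-\infty\) along such sequences.

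It follows that every superlevel set \(\{\phi\ge c\}\) (with \(c\) a finite value attained, e.g.\ \(c=\phi(I)\)) lies in \(\{rI\preceq X\preceq RI\}\) for some \(0<r\le R\), which is compact. On that set each \(v_i^\dagger Xv_i\ge r\norm{v_i}^2>0\), so \(\phi\) is finite and continuous there, and the maximum is attained. We do not need the full column rank of \(V\) for existence, only \(v_i\ne0\); full rank is used only for the invariance claim.

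For invariance, suppose \(A=VV^\dagger=WW^\dagger\) with \(V,W\in\C^{n\times d}\) both of full column rank. We claim \(W=VU\) for a unitary \(U\). Since \(V\) has full column rank, \(V^+=(V^\dagger V)^{-1}V^\dagger\) satisfies \(V^+V=I\). Set \(U=V^+W\). The column spaces of \(V\) and \(W\) both equal \(\operatorname{range}(A)\), so \(VV^+\), the orthogonal projector onto that range, fixes \(W\), giving \(VU=W\). Then
\[
  UU^\dagger=V^+WW^\dagger V^{+\dagger}=V^+VV^\dagger V^{+\dagger}=I ,
\]
and \(U\) is square, hence unitary.

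The rows of \(W\) are then \(w_i^\dagger=v_i^\dagger U\), i.e.\ \(w_i=U^\dagger v_i\). The map \(X\mapsto UXU^\dagger\) is a bijection of \(\PD_d\) preserving \(\det\) and \(\tr\), and it satisfies
\[
  w_i^\dagger Xw_i=v_i^\dagger (UXU^\dagger) v_i .
\]
Thus \(\phi_W(X)=\phi_V(UXU^\dagger)\), so \(\Phi(W)=\Phi(V)\).
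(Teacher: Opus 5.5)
Your proof is correct and follows essentially the same route as the paper. For existence, the \(-\tr X\) term dominates at infinity and \(\log\det X\) acts as a barrier at the boundary while the product terms stay bounded; for invariance, you reduce to \(W=VU\) with \(U\) unitary and change variables. Your pseudo-inverse construction of \(U\) and the explicit identity \(\phi_W(X)=\phi_V(UXU^\dagger)\) are tidier versions of the paper's isometry-extension step, and you are right that existence needs only \(v_i\ne0\), not the spanning property the paper invokes.
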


\begin{proof}
Since the vectors \(v_i\) span \(\C^d\), the term \(\log\det X\) forces \(\phi_V(X)\to-\infty\) when \(X\) approaches the boundary of \(\PSD_d\) while \(\tr X\) remains bounded. Indeed, the product terms are then bounded above by \(\log(\norm{v_i}^2\tr X)\), whereas \(\log\det X\to-\infty\). On the other hand, as \(\tr X\to\infty\),
\[
  \sum_{i=1}^n\log(v_i^\dagger Xv_i)+\log\det X
  \le
  \sum_{i=1}^n\log(\norm{v_i}^2\tr X)+d\log(\tr X/d),
\]
so the linear term \(-\tr X\) dominates. Therefore the maximum is attained on a compact subset of \(\PD_d\).

If \(W\) is another full-rank factor with \(WW^\dagger=VV^\dagger\), then the correspondence between the row vectors of \(V\) and \(W\) preserves all inner products on a spanning set. It therefore extends to a unitary change of coordinates, so \(W=VU\) for some unitary \(U\). The change of variables \(X\mapsto U^\dagger XU\) preserves \(\log\det X\), \(\tr X\), and the quantities \(v_i^\dagger Xv_i\). Hence \(\Phi(W)=\Phi(V)\).
\end{proof}

We write \(\Phi(A)\) for this invariant value and \(\widehat P(A)=e^{\Phi(A)}\). If \(A\) has a zero diagonal entry, we set \(\widehat P(A)=0\).

\section{Approximation Guarantee}\label{sec:proof}

The proof rests on the following Wick representation. As discussed above, this Gaussian representation is standard and has appeared before in work on PSD permanents \cite{Gurvits2003,AnariGurvitsOGSaberi2017}. For completeness we give a direct proof from the Wick--Isserlis formula.

\begin{lemma}[Wick formula for PSD permanents]\label{lem:wick}
Let \(A=VV^\dagger\succeq0\), with rows \(v_1,\ldots,v_n\), and let \(Z\sim\CN(0,I_d)\). Then
\[
  \per(A)=
  \E*{\prod_{i=1}^n \abs{v_i^\dagger Z}^2}
  =
  \frac{1}{\pi^d}
  \int_{\C^d}
  \prod_{i=1}^n \abs{v_i^\dagger z}^2
  e^{-z^\dagger z}\,dz.
\]
\end{lemma}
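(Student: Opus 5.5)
The plan is to expand the product into a Gaussian moment and apply the complex Wick--Isserlis formula. Write \(\abs{v_i^\dagger Z}^2=(v_i^\dagger Z)\overline{(v_i^\dagger Z)}\) and set \(a_i=v_i^\dagger Z\). Then
\[
  \prod_{i=1}^n\abs{v_i^\dagger Z}^2=\prod_{i=1}^n a_i\prod_{j=1}^n\overline{a_j}.
\]
Each \(a_i\) is a linear form in \(Z\) with no dependence on \(\overline Z\). The vector \((a_1,\ldots,a_n)\) is therefore jointly circularly symmetric complex Gaussian, with
\[
  \E{a_i\overline{a_j}}=v_i^\dagger\E{ZZ^\dagger}v_j=v_i^\dagger v_j=A_{ij}
  \quad\text{and}\quad
  \E{a_ia_j}=v_i^\dagger\E{ZZ^\top}\overline{v_j}=0.
\]
The second identity uses \(\E{ZZ^\top}=0\) for \(Z\sim\CN(0,I_d)\).

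The key step is the complex Wick formula for such vectors:
\[
  \E*{\prod_{i=1}^n a_i\prod_{j=1}^n\overline{a_j}}=\sum_{\sigma\in S_n}\prod_{i=1}^n\E{a_i\overline{a_{\sigma(i)}}}.
\]
I would derive it from the real Isserlis theorem \cite{Isserlis1918,Janson1997}, applied to the \(2n\) jointly Gaussian variables \(a_1,\ldots,a_n,\overline{a_1},\ldots,\overline{a_n}\). Isserlis holds for complex linear combinations of a real Gaussian vector by multilinearity. It expresses the moment as a sum over perfect matchings of the \(2n\) factors, each weighted by the product of pairwise second moments. A pair \((a_i,a_j)\) contributes \(\E{a_ia_j}=0\), and likewise a pair \((\overline{a_i},\overline{a_j})\) contributes \(0\). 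The only surviving matchings therefore pair each \(a_i\) with a distinct \(\overline{a_{\sigma(i)}}\), and these matchings correspond bijectively to permutations \(\sigma\in S_n\). Substituting \(\E{a_i\overline{a_{\sigma(i)}}}=A_{i\sigma(i)}\) gives \(\per(A)\).

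The second equality in the statement just writes out the expectation against the density \(\pi^{-d}e^{-z^\dagger z}\) recalled in \cref{sec:prelim}.

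The main obstacle is making the complex Wick step rigorous, in particular checking that mixed pairings vanish because \(Z\) is circular. As a fallback I would use a generating-function argument. The moment \(\E{a_{i_1}\cdots a_{i_n}\overline{a_{j_1}}\cdots\overline{a_{j_n}}}\) can be read off from \(\E{\exp(s^\top a+t^\top\overline a)}=\exp(s^\top At)\), where \(s,t\) are formal complex parameters. Extracting the coefficient of \(s_1\cdots s_nt_1\cdots t_n\) from \(\exp(s^\top A t)\) gives
\[
  \frac{1}{n!}\,[s_1\cdots s_nt_1\cdots t_n]\,(s^\top At)^n=\per(A)
\]
directly, after dividing out the factorials from the Taylor expansion.
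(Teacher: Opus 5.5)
Your proof is correct and takes the same route as the paper. Both expand $\prod_i\abs{v_i^\dagger Z}^2$ as a Gaussian moment and apply Wick--Isserlis, with circularity ($\E{ZZ^\top}=0$) killing every $a$--$a$ and $\overline{a}$--$\overline{a}$ pairing so that only permutation matchings survive; the one difference is that you apply Wick directly to the linear forms $a_i=v_i^\dagger Z$ with $\E{a_i\overline{a_j}}=A_{ij}$, which skips the paper's coordinate-level expansion and reindexing by $\sigma^{-1}$, and your multilinearity justification and generating-function fallback are both sound.
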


\begin{proof}
Write
\[
  v_i^\dagger Z=\sum_{a=1}^d \overline{(v_i)_a} Z_a,
  \qquad
  \overline{v_i^\dagger Z}=\sum_{b=1}^d (v_i)_b \overline{Z_b}.
\]
Then
\[
  \prod_{i=1}^n\abs{v_i^\dagger Z}^2
  =
  \sum_{a_1,\ldots,a_n}
  \sum_{b_1,\ldots,b_n}
  \prod_{i=1}^n
    \overline{(v_i)_{a_i}}(v_i)_{b_i}
  \prod_{i=1}^n Z_{a_i}
  \prod_{i=1}^n \overline{Z_{b_i}}.
\]
By the Wick--Isserlis formula for Gaussian moments \cite{Isserlis1918,Janson1997}, only pairings between \(Z\)-coordinates and conjugate coordinates contribute. In this complex circularly symmetric setting the basic contraction is \(\E{Z_a\overline{Z_b}}=\delta_{ab}\), while \(\E{Z_aZ_b}=\E{\overline{Z_a}\,\overline{Z_b}}=0\). Therefore
\[
  \E*{
    \prod_{i=1}^n Z_{a_i}
    \prod_{i=1}^n \overline{Z_{b_i}}
  }
  =
  \sum_{\sigma\in S_n}\prod_{i=1}^n \delta_{a_i,b_{\sigma(i)}}.
\]
Substituting this into the expansion gives
\[
  \E*{\prod_{i=1}^n \abs{v_i^\dagger Z}^2}
  =
  \sum_{\sigma\in S_n}
  \sum_{a_1,\ldots,a_n}
  \prod_{i=1}^n
    \overline{(v_i)_{a_i}}(v_i)_{a_{\sigma^{-1}(i)}}.
\]
In the last product, reindexing the second factor by \(i=\sigma(j)\) yields
\[
  \prod_{i=1}^n
    \overline{(v_i)_{a_i}}(v_i)_{a_{\sigma^{-1}(i)}}
  =
  \prod_{j=1}^n
    \overline{(v_j)_{a_j}}(v_{\sigma(j)})_{a_j}.
\]
Thus the sums over the coordinates factor, and
\[
  \E*{\prod_{i=1}^n \abs{v_i^\dagger Z}^2}
  =
  \sum_{\sigma\in S_n}
  \prod_{i=1}^n
  \sum_{a=1}^d
    \overline{(v_i)_a}(v_{\sigma(i)})_a
  =
  \sum_{\sigma\in S_n}\prod_{i=1}^n v_i^\dagger v_{\sigma(i)}
  =
  \per(A).
\]
The integral form is the expectation written using the density of \(Z\).
\end{proof}

\begin{theorem}[Entropic sandwich]\label{thm:sandwich}
For every Hermitian \(A\succeq0\),
\[
  e^{-\gamma n}\widehat P(A)\le \per(A)\le \widehat P(A).
\]
\end{theorem}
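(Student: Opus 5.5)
Dispose first of the degenerate case. If \(A\) has a zero diagonal entry, then \(\per(A)=0=\widehat P(A)\), as observed in \cref{sec:prelim}. Otherwise every \(v_i\ne0\), \(V\) has full column rank \(d\), and by \cref{lem:exist-invariant} the maximum \(\Phi(V)\) is attained. The plan is to apply \cref{lem:gibbs} to the Wick integral of \cref{lem:wick} with
\[
  U(z)=\sum_{i=1}^n\log\abs{v_i^\dagger z}^2\in\R\cup\{-\infty\}.
\]
With this choice \(Z_U=\per(A)\). This quantity is finite, and it is positive because the polynomial \(\prod_i\abs{v_i^\dagger z}^2\) vanishes only on a finite union of hyperplanes. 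Then
\[
  \log\per(A)=\sup_\mu\set*{\E_\mu{U}-\E_\mu{Z^\dagger Z}+h(\mu)-d\log\pi}.
\]
The two inequalities come from bounding this supremum from below and from above.

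\emph{Lower bound.} Take a maximizer \(X\succ0\) of \(\phi_V\) and plug in \(\mu=\CN(0,X)\). This \(\mu\) is admissible: its second moment is finite, \(h(\mu)=d\log\pi+d+\log\det X\), and \(\E_\mu{U}\) is finite, as shown next. Under \(\mu\), each \(v_i^\dagger Z\) is distributed as \(\CN(0,v_i^\dagger Xv_i)\). So \(\abs{v_i^\dagger Z}^2=(v_i^\dagger Xv_i)\,T_i\) with \(T_i\sim\Exp(1)\), and by \cref{eq:euler-constant}
\[
  \E_\mu{\log\abs{v_i^\dagger Z}^2}=\log(v_i^\dagger Xv_i)-\gamma.
\]
Also \(\E_\mu{Z^\dagger Z}=\tr X\). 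Summing the terms gives \(\log\per(A)\ge\phi_V(X)-\gamma n=\Phi(V)-\gamma n\). Correlations among the \(T_i\) do not matter, since only expectations of sums appear.

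\emph{Upper bound.} Take the Gibbs density \(\nu\propto\prod_i\abs{v_i^\dagger z}^2e^{-z^\dagger z}\), which attains the supremum. Set \(M=\E_\nu{ZZ^\dagger}\); this is \(\succ0\) by \cref{lem:complex-max-entropy}. Jensen's inequality (concavity of \(\log\)) gives
\[
  \E_\nu{\log\abs{v_i^\dagger Z}^2}\le\log\E_\nu{\abs{v_i^\dagger Z}^2}=\log(v_i^\dagger Mv_i).
\]
We also have \(\E_\nu{Z^\dagger Z}=\tr M\), and \cref{lem:complex-max-entropy} gives \(h(\nu)\le d\log\pi+d+\log\det M\). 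Combining these, \(\log\per(A)\le\phi_V(M)\le\Phi(V)\).

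The main obstacle is checking the admissibility hypotheses of \cref{lem:gibbs}, namely that \(\nu\in\mathcal A_U\). Three things are needed.
\begin{itemize}
  \item \(\nu\) has all moments, because its density is a polynomial times a Gaussian.
  \item \(\E_\nu{U}\) is well defined. It is bounded above by Jensen's inequality. It is also \(>-\infty\): \(\log\abs{\ell(z)}^2\) is locally integrable for a nonzero linear form \(\ell\), and against \(\nu\) the weight \(\abs{\ell}^2\log\abs{\ell}^2\) is even bounded near the zero set.
  \item \(h(\nu)\) is finite. This follows once \(\E_\nu{U}\) is finite, since \(-\log\nu=-U+z^\dagger z+\text{const}\).
\end{itemize}
Alternatively, the upper bound can avoid \cref{lem:gibbs} altogether. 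Write \(\per(A)=Z_U\) and use the identity \(\log Z_U=\E_\nu{U}-\E_\nu{Z^\dagger Z}+h(\nu)-d\log\pi\), which holds by direct computation from the definition of \(\nu\). Similarly, the lower bound follows directly from \(\D{\mu\river\nu}\ge0\) with \(\mu=\CN(0,X)\). I would present the proof this way to keep the integrability checks minimal.
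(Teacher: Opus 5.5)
Your proof is correct and follows the paper's argument. Both apply the Gibbs variational formula to the Wick integral, use the Gaussian trial density \(\CN(0,X)\) for the lower bound (losing exactly \(\gamma\) per factor), and use Jensen plus the complex maximum-entropy bound for the upper bound. The only difference is cosmetic: you bound the functional at the Gibbs density \(\nu\) itself, checking its admissibility more explicitly than the paper does, rather than at an arbitrary trial \(\mu\) followed by a supremum; this is equivalent since \(\nu\) attains the supremum.
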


\begin{proof}
If \(A\) has a zero diagonal entry, then both sides are zero by convention, so assume all rows \(v_i\) are nonzero. Let
\[
  U(z)=\sum_{i=1}^n \log \abs{v_i^\dagger z}^2.
\]
By \cref{lem:wick},
\[
  \log\per(A)
  =
  \log\bracks*{
  \pi^{-d}\int_{\C^d}\exp(U(z)-z^\dagger z)\,dz
  }.
\]
The integral is finite because the integrand is a polynomial times a Gaussian density. It is positive because the zero set of \(\prod_i v_i^\dagger z\) is a finite union of proper complex hyperplanes, hence has Lebesgue measure zero. The corresponding Gibbs density is proportional to a polynomial times the standard Gaussian density, so it has finite second moment, finite entropy, and finite \(\E{U(Z)}\). Thus \cref{lem:gibbs} applies.

For the upper bound, fix an admissible trial density \(\mu\in\mathcal A_U\), and let \(M=\E_\mu{ZZ^\dagger}\). Jensen's inequality gives
\[
  \E_\mu{\log \abs{v_i^\dagger Z}^2}
  \le
  \log \E_\mu{\abs{v_i^\dagger Z}^2}
  =
  \log(v_i^\dagger Mv_i).
\]
Also \(\E_\mu{Z^\dagger Z}=\tr M\), and the complex entropy bound \cref{lem:complex-max-entropy} gives
\[
  h(\mu)\le d\log\pi+d+\log\det M.
\]
Substituting in the Gibbs formula,
\[
  \E_\mu{U(Z)}-\E_\mu{Z^\dagger Z}+h(\mu)-d\log\pi
  \le
  \sum_{i=1}^n\log(v_i^\dagger Mv_i)
  +\log\det M-\tr M+d
  \le
  \Phi(A).
\]
Here \(M\succ0\) by \cref{lem:complex-max-entropy}, so the last expression is one of the feasible values in \cref{eq:phi}. Taking the supremum over \(\mu\in\mathcal A_U\) yields \(\log\per(A)\le\Phi(A)\), or \(\per(A)\le \widehat P(A)\).

For the lower bound, take \(Z\sim\CN(0,X)\) for an arbitrary \(X\in\PD_d\). Then
\[
  v_i^\dagger Z\sim\CN(0,v_i^\dagger Xv_i),
\]
so by \cref{eq:euler-constant},
\[
  \E{\log \abs{v_i^\dagger Z}^2}
  =
  \log(v_i^\dagger Xv_i)-\gamma.
\]
For this Gaussian trial density, \(h(Z)=d\log\pi+d+\log\det X\) and \(\E{Z^\dagger Z}=\tr X\). Therefore the Gibbs formula gives
\[
  \log\per(A)
  \ge
  \sum_{i=1}^n\log(v_i^\dagger Xv_i)
  -\gamma n+\log\det X-\tr X+d
  =
  \phi_V(X)-\gamma n.
\]
Maximizing over \(X\in\PD_d\) gives
\[
  \log\per(A)\ge \Phi(A)-\gamma n.
\]
This is the desired lower bound.
\end{proof}

\section{Deterministic Computation}\label{sec:algorithm}

The relaxation \cref{eq:phi} is a concave maximization over the positive definite cone. Equivalently, \(-\phi_V\) is a convex function. This places the algorithmic part in the same convex-optimization framework as earlier PSD-permanent relaxations \cite{AnariGurvitsOGSaberi2017,YuanParrilo2022}. The value \(\Phi(A)\) is accessible by standard deterministic convex optimization, for example by the ellipsoid method or interior-point methods \cite{GroetschelLovaszSchrijver1993,NesterovNemirovskii1994}.

For completeness, we record a simple a priori bound on the optimizer.

\begin{lemma}[Trace normalization at the optimum]\label{lem:trace}
Let \(X_\star\) maximize \(\phi_V\). Then
\[
  \tr X_\star=n+d.
\]
\end{lemma}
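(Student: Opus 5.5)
The plan is a scaling argument along the ray through the maximizer. Since \(X_\star\) lies in the open cone \(\PD_d\), it is an interior maximizer, as guaranteed by \cref{lem:exist-invariant}. For \(t>0\) we have \(tX_\star\in\PD_d\), so the one-variable function \(g(t)=\phi_V(tX_\star)\) has a maximum at \(t=1\). This gives the first-order condition \(g'(1)=0\).

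Next I expand \(g\) using homogeneity. Each term \(v_i^\dagger (tX_\star)v_i\) equals \(t\,v_i^\dagger X_\star v_i\), and \(\det(tX_\star)=t^d\det X_\star\). Hence
\[
  g(t)=\sum_{i=1}^n\log(v_i^\dagger X_\star v_i)+n\log t+\log\det X_\star+d\log t-t\,\tr X_\star+d .
\]
Differentiating gives \(g'(t)=(n+d)/t-\tr X_\star\). Setting \(g'(1)=0\) yields \(\tr X_\star=n+d\).

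The only point needing care is that \(g\) is differentiable and that \(t=1\) is an interior point of its domain \((0,\infty)\). Both are immediate: every \(v_i^\dagger X_\star v_i>0\) because \(v_i\neq0\) and \(X_\star\succ0\), and \(g\) is a sum of logarithms plus a linear term. No real obstacle is expected.

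As an alternative route, one can use the full gradient condition
\[
  \sum_i \frac{v_iv_i^\dagger}{v_i^\dagger X_\star v_i}+X_\star^{-1}-I=0 .
\]
Taking the trace inner product of this condition with \(X_\star\) gives the same identity \(n+d-\tr X_\star=0\).
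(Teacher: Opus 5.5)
Your proof is correct and is essentially the paper's argument. The paper writes the full first-order condition \(\sum_i v_iv_i^\dagger/(v_i^\dagger X_\star v_i)+X_\star^{-1}=I\) and pairs it with \(X_\star\); that pairing is exactly your \(g'(1)=0\), the directional derivative along the ray through \(X_\star\), and your alternative route is the paper's proof. Your one-variable version needs only that \(X_\star\) is optimal along its own ray, so it avoids computing the matrix gradient.
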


\begin{proof}
At the optimum the first-order condition is
\[
  \sum_{i=1}^n
  \frac{v_i v_i^\dagger}{v_i^\dagger X_\star v_i}
  +X_\star^{-1}
  =
  I.
\]
Taking the trace after multiplying by \(X_\star\) gives
\[
  \sum_{i=1}^n
  \frac{\tr(X_\star v_i v_i^\dagger)}{v_i^\dagger X_\star v_i}
  +\tr(I_d)
  =
  \tr X_\star.
\]
Each term in the sum is \(1\), so \(\tr X_\star=n+d\).
\end{proof}

Thus the optimizer lies in the compact convex set
\[
  \set{X\in\PSD_d:\tr X=n+d}.
\]
The objective has a logarithmic barrier at the boundary, so an additive approximation to \(\Phi(A)\) can be obtained by standard weak optimization over a bounded convex body \cite{GroetschelLovaszSchrijver1993}. For rational input, one may work with a sufficiently accurate numerical spectral factorization of \(A\) and a spectrally truncated feasible region. The usual ellipsoid or self-concordant-barrier machinery gives polynomial dependence on the input size and \(\log(1/\eps)\) in the standard finite-precision model \cite{GroetschelLovaszSchrijver1993,NesterovNemirovskii1994}, and standard perturbation bounds for spectral factorizations control the effect of replacing \(A\) by the computed Gram factor \cite{Higham2002}. This is the same finite-precision convention used in earlier convex-relaxation algorithms for PSD permanents \cite{AnariGurvitsOGSaberi2017}.

\begin{theorem}[Algorithmic form]\label{thm:algorithm}
For every \(\eps>0\), there is a deterministic algorithm which, given a rational Hermitian positive semidefinite matrix \(A\in\C^{n\times n}\), runs in time polynomial in the input size, \(n\), and \(\log(1/\eps)\), and outputs a number \(Q(A)\) such that
\[
  e^{-(\gamma+\eps)n}Q(A)\le \per(A)\le Q(A).
\]
\end{theorem}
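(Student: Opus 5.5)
The plan is to reduce \cref{thm:algorithm} to \cref{thm:sandwich} plus an additive approximation of \(\Phi(A)\). Suppose we can compute deterministically a number \(\tilde\Phi\) with \(\Phi(A)\le\tilde\Phi\le\Phi(A)+\eps n\). Then \(Q(A)=e^{\tilde\Phi}\) satisfies \(\per(A)\le\widehat P(A)\le Q(A)\). It also satisfies \(Q(A)\le e^{\eps n}\widehat P(A)\le e^{(\gamma+\eps)n}\per(A)\), which is exactly the claim. The zero-diagonal case is detected directly, and there we output \(Q(A)=0\). So everything reduces to computing an upper estimate of \(\Phi(A)\) with additive error \(\eps n\) (or \(\eps\), which is stronger) in time polynomial in the input size and \(\log(1/\eps)\).

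\textbf{Step 1: avoid exact factorizations.} Rather than factor \(A\), I would reformulate \(\Phi\) intrinsically in terms of \(A\), so that no irrational spectral data is needed. Write \(X=\sum\) in the basis given by \(V\): for \(X\succ0\) on \(\C^d\), put \(Y=VXV^\dagger\). Then \(Y\) is an \(n\times n\) PSD matrix with range in \(\mathrm{range}(A)\), and \(v_i^\dagger Xv_i=Y_{ii}\).

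The other terms transform as follows. Using \(A^+=V(V^\dagger V)^{-2}V^\dagger\), one gets \(\tr X=\tr(A^{+}Y)\). Writing \(\det_+\) for the product of nonzero eigenvalues, \(\det X=\det_+(Y)/\det_+(A)\).

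Alternatively, and more simply, we may take any rational full-column-rank \(V\) with \(A=VV^\dagger\) replaced by a rational factorization \(A=LDL^\dagger\) (an \(LDL^\dagger\) decomposition with pivoting), giving \(V=LD^{1/2}\). Only the square roots of the diagonal entries \(D\) are irrational. Even these can be absorbed by substituting \(X=D^{-1/2}X'D^{-1/2}\), which gives an objective with rational data \(L,D\) plus the constant \(-\log\det D\). Thus \(\phi\) becomes a concave function with rational coefficients, and its value can be evaluated to any precision.

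\textbf{Step 2: localize and optimize.} By \cref{lem:trace}, the optimizer satisfies \(\tr X_\star=n+d\). I also need a lower bound \(X_\star\succeq\delta I\) with \(\log(1/\delta)\) polynomial in the input size. From the first-order condition we get \(X_\star^{-1}\preceq I\), hence \(X_\star\succeq I\) in the original coordinates. In the rescaled coordinates this becomes \(X'\succeq D\)-scaled bounds, so \(\log\) of the eigenvalue bounds is polynomial in the bit size of \(A\).

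The feasible region is therefore the bounded, well-rounded convex body
\[
  K=\set{X:\delta I\preceq X,\ \tr X\le n+d}.
\]
On \(K\) the objective is Lipschitz, with constant polynomially bounded in terms of \(1/\delta\) and \(\norm{v_i}\), and it is evaluable with its gradient to any precision. The ellipsoid method (weak convex optimization as in \cite{GroetschelLovaszSchrijver1993}) then returns a point \(\hat X\) with \(\phi(\hat X)\ge\Phi-\eps\). This takes time polynomial in the input size and \(\log(1/\eps)\), because the volume ratio and Lipschitz constant are only exponentially large in the bit size.

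\textbf{Step 3: obtain a certified upper bound.} To output an upper bound rather than a lower one, I would set \(\tilde\Phi=\tilde\phi(\hat X)+2\eps\), where \(\tilde\phi\) is \(\phi\) evaluated with error at most \(\eps\).

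\textbf{Main obstacle.} The main obstacle is the finite-precision bookkeeping in Steps 1 and 2. This consists of proving the eigenvalue lower bound \(\delta\) with polynomial bit complexity, and handling rank determination exactly, for which rational Gaussian elimination suffices. I would first try the rational \(LDL^\dagger\) route, since it removes the need for the spectral perturbation bounds of \cite{Higham2002}.
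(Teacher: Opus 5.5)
Your top-level reduction is the paper's: \cref{thm:sandwich} plus an additive approximation of \(\Phi(A)\), shifted so the output is an upper estimate. The paper takes \(\Phi-\eps n/2\le\widetilde\Phi\le\Phi\) and outputs \(\exp(\widetilde\Phi+\eps n/2)\), which amounts to the same thing. The real difference is the finite-precision part. The paper only sketches it, through a numerical spectral factorization, a ``spectrally truncated'' region, and perturbation bounds from \cite{Higham2002}.

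You instead compute an exact rational \(LDL^\dagger\) factorization with diagonal pivoting, so the rank is found exactly. You then absorb \(D^{1/2}\) via \(X'=D^{1/2}XD^{1/2}\), which leaves only rational data \(\ell_i\), \(D^{-1}\) and the additive constant \(-\log\det D\). You also read off from the stationarity condition that \(X_\star\succeq I\), which the paper never states. Together with \(\tr X_\star=n+d\), this gives an explicit, well-rounded localization. Moreover, on \(X\succeq I\) one has \(v_i^\dagger Xv_i\ge\norm{v_i}^2\), so the gradient of \(\phi_V\) there has operator norm at most \(n+2\).

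Your route gives a self-contained exact argument, with no eigenvector perturbation theory and no numerical rank decisions. The paper's route is shorter and closer to numerical practice, but it leaves implicit how \(\Phi\) behaves when the Gram factor is perturbed.

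Three small cleanups:
\begin{itemize}
\item Write the body in the rescaled variable you actually optimize over: \(\{X' : X'\succeq D,\ \tr(D^{-1}X')\le n+d\}\), rather than mixing it with the original-coordinate description \(\delta I\preceq X\), \(\tr X\le n+d\).
\item Step 3 only gives \(\Phi\le\widetilde\Phi\le\Phi+3\eps\), so run the solver at accuracy \(\eps/3\).
\item Delete the dangling fragment ``Write \(X=\sum\)'' in Step 1, since you drop the \(Y=VXV^\dagger\) reformulation anyway.
\end{itemize}
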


\begin{proof}
If \(A\) has a zero diagonal entry, output \(0\). Otherwise compute a sufficiently accurate numerical full-rank Gram factor \(V\), and solve the concave program \cref{eq:phi} to additive accuracy at most \(\eps n/2\). Let \(\widetilde\Phi\) be the resulting value with
\[
  \Phi(A)-\eps n/2\le \widetilde\Phi\le \Phi(A).
\]
Output
\[
  Q(A)=\exp(\widetilde\Phi+\eps n/2).
\]
Then \(Q(A)\ge e^{\Phi(A)}=\widehat P(A)\ge\per(A)\). Also
\[
  Q(A)\le e^{\Phi(A)+\eps n/2}
  \le e^{(\gamma+\eps/2)n}\per(A),
\]
by \cref{thm:sandwich}. Replacing \(\eps/2\) by \(\eps\) in the requested accuracy gives the stated guarantee. Standard spectral factorization perturbation bounds \cite{Higham2002} allow the Gram factor and objective values to be computed to the needed additive accuracy within the finite-precision model described above.
\end{proof}

\begin{remark}
The theorem is stated in the conventional finite-precision form. In exact real arithmetic, one can set \(Q(A)=\widehat P(A)\) and obtains the exact \(e^{\gamma n}\)-sandwich of \cref{thm:sandwich}.
\end{remark}

\section{Discussion}

The relaxation \cref{eq:phi} can be viewed as an entropy-corrected product-of-linear-forms relaxation, building on the product-of-linear-forms viewpoint in prior work \cite{AnariGurvitsOGSaberi2017,YuanParrilo2022,EbrahimnejadNagdaOG2024}. If one keeps only the product term
\[
  \sum_i \log(v_i^\dagger Xv_i)
\]
under a trace normalization, the analysis must separately pay for radial fluctuations of the complex Gaussian. The correction \(\log\det X-\tr X+d\) accounts for these fluctuations inside the variational problem itself. This is why the proof does not incur the additional \(e^n\)-scale loss that appeared in earlier approximation analyses \cite{AnariGurvitsOGSaberi2017,EbrahimnejadNagdaOG2024}.

The proof also suggests a broader template. Whenever a counting quantity admits an integral representation as
\[
  \int e^{U(z)}\,d\gamma(z),
\]
one may upper bound its log-partition function by replacing an arbitrary distribution with the maximum-entropy distribution of matching low-order moments, and lower bound it by evaluating the same variational formula on a structured trial family. The approximation factor is then the worst-case Jensen gap for that trial family. For PSD permanents, the relevant one-dimensional gap is exactly \(\gamma\).

\appendix

\section{Provenance}\label{app:provenance}

The first and second authors each obtained the result by interacting with GPT 5.5 Pro Extended.

In the first author's interaction, the model was given prior work on positive semidefinite permanents and asked to find an \(e^{\gamma n}\)-approximation. The first author did not provide mathematical insights, hints, proposed lemmas, proof strategies, or follow-up prompts before the model produced the theorem, relaxation, and proof.

Independently and in parallel, the second author obtained the same result through a separate interaction with roughly thirty back-and-forth messages. In that interaction, GPT 5.5 Pro Extended first explored incremental improvements and attempted to reanalyze the earlier convex program from prior work before arriving at the new entropy-corrected convex program and the \(e^{(\gamma+o(1))n}\)-approximation. The second author provided high-level guidance during this interaction.

Both authors verified the mathematics. Codex was used to assemble and typeset the paper.

\PrintBibliography

\end{document}